\documentclass[a4paper,UKenglish,cleveref, autoref, thm-restate]{lipics-v2021}

\usepackage[utf8]{inputenc} % allow utf-8 input
\usepackage[T1]{fontenc}    % use 8-bit T1 fonts
\usepackage{hyperref}       % hyperlinks
\usepackage{url}            % simple URL typesetting
\usepackage{booktabs}       % professional-quality tables
\usepackage{amsfonts}       % blackboard math symbols
\usepackage{nicefrac}       % compact symbols for 1/2, etc.
\usepackage{microtype}      % microtypography
\usepackage[dvipsnames]{xcolor}         % colors
\usepackage{tikz}
\usetikzlibrary{positioning}

\nolinenumbers

\providecommand{\citep}[1]{\cite{#1}}
\providecommand{\va}[1]{}

\newcommand{\hlcolor}{Yellow!35}
\newcommand{\hlcolorTwo}{LimeGreen!35}
\makeatletter
\newenvironment{btHighlight}[1][]
{\begingroup\tikzset{bt@Highlight@par/.style={#1}}\begin{lrbox}{\@tempboxa}}
{\end{lrbox}\bt@HL@box[bt@Highlight@par]{\@tempboxa}\endgroup}

\newcommand\btHL[1][]{%
  \begin{btHighlight}[#1]\bgroup\aftergroup\bt@HL@endenv%
}
\def\bt@HL@endenv{%
  \end{btHighlight}%   
  \egroup
}
\newcommand{\bt@HL@box}[2][]{%
  \tikz[#1]{%
    \pgfpathrectangle{\pgfpoint{1pt}{0pt}}{\pgfpoint{\wd #2}{\ht #2}}%
    \pgfusepath{use as bounding box}%
    \node[anchor=base west, fill=\hlcolor,outer sep=0pt,inner xsep=1pt, inner ysep=0pt, rounded corners=2pt, minimum height=\ht\strutbox+2pt,#1]{\raisebox{1pt}{\strut}\strut\usebox{#2}};
  }%
}

\newenvironment{btHighlightTwo}[1][]
{\begingroup\tikzset{bt@HighlightTwo@par/.style={#1}}\begin{lrbox}{\@tempboxa}}
{\end{lrbox}\bt@HLTwo@box[bt@HighlightTwo@par]{\@tempboxa}\endgroup}

\newcommand\btHLTwo[1][]{%
  \begin{btHighlightTwo}[#1]\bgroup\aftergroup\bt@HLTwo@endenv%
}
\def\bt@HLTwo@endenv{%
  \end{btHighlightTwo}%   
  \egroup
}
\newcommand{\bt@HLTwo@box}[2][]{%
  \tikz[#1]{%
    \pgfpathrectangle{\pgfpoint{1pt}{0pt}}{\pgfpoint{\wd #2}{\ht #2}}%
    \pgfusepath{use as bounding box}%
    \node[anchor=base west, fill=\hlcolorTwo,outer sep=0pt,inner xsep=1pt, inner ysep=0pt, rounded corners=2pt, minimum height=\ht\strutbox+2pt,#1]{\raisebox{1pt}{\strut}\strut\usebox{#2}};
  }%
}

\usepackage{multicol}
\usepackage{listings}
\usepackage{amsthm,amsmath}

\title{Concurrent Splay-Based Tree}

\author{Vitaly Aksenov}{ITMO University, Russia}{aksenov@itmo.ru}{}{}

\author{Rene van Bevern}{Independent Researcher}{m5bere2@proton.me}{}{}

\author{Artem Shilkin}{ITMO University, Russia}{bpropeller.03@gmail.com}{}{}

 \authorrunning{V. Aksenov, R. van Bevern, and A. Shilkin}

\Copyright{Vitaly Aksenov, Rene van Bevern, Artem Shilkin} %TODO mandatory, please use full first names. LIPIcs license is "CC-BY";  http://creativecommons.org/licenses/by/3.0/

\ccsdesc[100]{\textcolor{red}{Replace ccsdesc macro with valid one}} %TODO mandatory: Please choose ACM 2012 classifications from https://dl.acm.org/ccs/ccs_flat.cfm 

\keywords{Concurrent data structures, binary search trees, splay trees, adaptive data structures}

\EventEditors{John Q. Open and Joan R. Access}
\EventNoEds{2}
\EventLongTitle{42nd Conference on Very Important Topics (CVIT 2016)}
\EventShortTitle{CVIT 2016}
\EventAcronym{CVIT}
\EventYear{2016}
\EventDate{December 24--27, 2016}
\EventLocation{Little Whinging, United Kingdom}
\EventLogo{}
\SeriesVolume{42}
\ArticleNo{23}
\begin{document}
\maketitle

\begin{abstract}
Most work on efficient concurrent ordered indices, such as concurrent binary search trees, B-trees, skip lists, etc., has focused on data structures that provide good \emph{worst-case} guarantees. In real workloads, objects are often accessed at different rates, since access distributions may be non-uniform. Many efficient distribution-adaptive data structures exist in the sequential case; however, they are often complicated to make efficient in the concurrent case.

The most prominent distribution-adaptive data structure is Splay Tree. Its most important advantage is that it does not store any balancing information and provides a reasonable performance improvement on extremely skewed workloads, such as Zipfian workloads. This paper proposes a splay-like rotation design for concurrent binary search trees. Instead of moving an accessed node to the root, rotations use two depth thresholds that are based on the static-optimality complexity computed from the number of accesses to the node: a node is rotated only when it is substantially deeper than the upper threshold, and rotations of the node stop before reaching the lower threshold. This design aims to preserve the main practical benefit of splaying on skewed workloads while reducing contention near the root.

We present two variants of the rotation design: one using an exact 64-bit access counter per node and one using a 6-bit approximate counter. We prove static optimality for the corresponding sequential read-only tree and evaluate both rotation designs by implementing them on top of the concurrent AVL tree of Bronson et al. Our experiments show that the approach can improve throughput on several skewed workloads.

\vspace{0.5cm}

% \noindent\textbf{AI Disclosure: We used Codex with ChatGPT-5.5 to assist with: 1)~grammar and consistency of the notation; and 2)~writing the background Section 2.}
\end{abstract}

% \newpage

\section{Introduction}
\vspace{-0.3cm}

There has been a significant effort to design concurrent data structures from sequential variants, e.g., hash tables~\citep{Michael, HSBook}, skip lists~\citep{FraserThesis, LazySkiplist, DougSkiplist}, and search trees~\citep{Natarajan14, bronson2010practical}. However, most of these works have focused on data structures with optimal worst-case guarantees. It is known that the worst-case complexity of point operations in any data structure built using only comparisons is logarithmic in the size of the data structure. 
One of the ways to escape this lower bound is to adapt to the workload, i.e., requests.
% We are aware of two ways to escape this lower bound: adapt to the stored data or adapt to requests. In the first case, one can utilise patterns in the underlying data. For example, if all keys in a data structure are even numbers from $2$ to $2n$, then one can obtain the position of a key using just linear interpolation. Data structures that use patterns in data are known as learned indices~\cite{al2025survey}. Unfortunately, these data structures usually require more than a simple comparison, so they do not work well on all types of keys, e.g., strings. Due to this limitation, in this work, we focus on the second approach for escaping the worst case~--- adapting to requests.

Fortunately, in many real workloads, the access rates for keys are not uniform. This fact is well known and is modelled in several industrial benchmarks, such as YCSB~\citep{cooper2010benchmarking} and TPC-C~\citep{TPCC}, where the generated access distributions are heavy-tailed, e.g., following a Zipfian distribution~\citep{cooper2010benchmarking}. There is a large line of work devoted to adaptive data structures in the sequential case; see, e.g.,~\citep{knuth1997art} and references therein, with the most renowned one being Splay Tree~\cite{sleator1985self}. One can consider it as one of the simplest yet efficient data structures using no balancing information and working exceptionally well on Zipfian workloads, with the most accessed element always very close to the root. Unfortunately, Splay Tree is not easy to make concurrent efficiently due to its rebalancing procedure, which rotates the accessed node to the root. This has an effect on its concurrent implementation: multiple concurrent operations will contend for the root, leading to a large bottleneck.

Because of that issue, we are unaware of any work on making the original Splay Tree concurrent without changes. Instead, researchers have proposed two adaptive data structures based on specialized rotations: CBTree~\citep{CBTree} and Splay-List~\cite{aksenov2020splay}. CBTree uses a rebalancing procedure similar to those of both AVL and Splay Trees. Intuitively, it tries to maintain the following invariant for each node: the ratio between the numbers of requests to the left and right subtrees should not exceed some predefined constant. Compared to Splay Tree, which does not store any additional information, CBTree requires storing three integers per node: the number of accesses to it and the number of accesses to both subtrees (this can be improved to two integers by storing just the size of its own subtree). In addition to the memory overhead, it appeared to scale a bit worse than the second known concurrent adaptive data structure, Splay-List. Splay-List is based on a Skip List~\cite{Pugh} with a rebalancing procedure that uses the number of accesses. Unfortunately, better scaling comes with even worse memory utilization: Splay-List requires much more memory than CBTree, since it stores the number of accesses per node plus the number of accesses per ``subtree''.

In this work, we try to come up with data structures based on Splay Tree that work better and use less additional memory than the presented counterparts. For that, we propose two concurrent variations of Splay Tree that use the same rotations as the vanilla one while requiring only one integer per node: the number of accesses. The idea is quite simple. The usual goal of an adaptive data structure is to provide static optimality for accesses, recalled in Section~\ref{sec:background}. Splay Tree satisfies this property, but, as we explained, its straightforward implementation creates a bottleneck close to the root. Thus, instead of splaying to the root, we rotate the node only up to depth $A \cdot \log (m / ac(x))$. That change is not enough, since nodes would still be splayed on each access. To reduce the number of rotations, we rotate the node only if its current depth exceeds $B \cdot \log (m / ac(x))$. Finally, we initiate the rotation only with some probability, which was experimentally shown to be helpful for Splay Tree~\cite{albers2002randomized} while maintaining the complexity.

Our first version of the rotation design stores the number of accesses in each node using 64 bits. Then, we propose a second version based on an approximate counter by Morris~\cite{morris1978counting}. Instead of the number of accesses, it stores roughly its logarithm, and on an access it increments its value, $r$, with probability $\frac{1}{2^r}$.

We analyze the first proposed rotation design in a sequential read-only setting and show that it preserves static optimality. 
We then evaluate our design in a concurrent setting by integrating it into the AVL tree of Bronson et al.~\cite{bronson2010practical}. The implementation is compared with the original AVL tree and with CBTree. The experiments show that our rotation design improves the performance on several skewed workloads.

The paper has the following structure. In Section~\ref{sec:background}, we recall the necessary background on Splay Tree and static optimality. In Section~\ref{sec:sequential}, we propose the design of our rotation procedure. In Section~\ref{sec:proof}, we prove that this design leads to the static-optimality property. In Section~\ref{sec:experiments}, we present the evaluation of our data structure.

\vspace{-0.3cm}
\section{Background}
\label{sec:background}
\vspace{-0.2cm}

Splay tree~\cite{sleator1985self} is a binary search tree that does not store balancing information. An access (or \texttt{get} operation) to a key $x$ traverses the tree as usual. If $x$ is found in a node $u$, the tree then \emph{splays} $u$: it repeatedly applies local rotations that move $u$ toward the root. 
%The search cost is the length of the path to $u$, and the rotations are intended to make future accesses to $x$ and to nearby keys cheaper.

The splay step uses three types of rotations, shown in Figure~\ref{fig:splay-rotations}. In the \emph{zig} case, $u$ has no grandparent, so one ordinary rotation makes $u$ the root. In the \emph{zig-zig} case, $u$ and its parent are both left children or both right children; the algorithm rotates the parent and then $u$. In the \emph{zig-zag} case, $u$ is a left child and its parent is a right child, or symmetrically; the algorithm rotates $u$ twice. The zig-zig and zig-zag cases decrease the depth of $u$ by two, while the final zig decreases it by one.

\begin{figure}[t]
\centering
\tikzset{
    splaynode/.style={circle,draw,minimum size=5.5mm,inner sep=0pt,font=\small},
    subtree/.style={font=\small},
    treeedge/.style={-}
}
\begin{minipage}[t]{0.31\textwidth}
\centering
\resizebox{\linewidth}{!}{%
\begin{tikzpicture}[level distance=8mm,sibling distance=12mm,baseline]
    \node[splaynode] (p) {$v$}
        child {node[splaynode] (u) {$u$}
            child {node[subtree] {$A$}}
            child {node[subtree] {$B$}}}
        child {node[subtree] {$C$}};
    \node at (1.7,-0.8) {$\Rightarrow$};
    \begin{scope}[xshift=3.4cm]
    \node[splaynode] {$u$}
        child {node[subtree] {$A$}}
        child {node[splaynode] {$v$}
            child {node[subtree] {$B$}}
            child {node[subtree] {$C$}}};
    \end{scope}
    \node at (1.7,-2.45) {\small zig};
\end{tikzpicture}
}
\end{minipage}\hfill
\begin{minipage}[t]{0.31\textwidth}
\centering
\resizebox{\linewidth}{!}{%
\begin{tikzpicture}[level distance=8mm,sibling distance=12mm,baseline]
    \node[splaynode] {$w$}
        child {node[splaynode] {$v$}
            child {node[splaynode] {$u$}
                child {node[subtree] {$A$}}
                child {node[subtree] {$B$}}}
            child {node[subtree] {$C$}}}
        child {node[subtree] {$D$}};
    \node at (1.9,-1.2) {$\Rightarrow$};
    \begin{scope}[xshift=3.8cm]
    \node[splaynode] {$u$}
        child {node[subtree] {$A$}}
        child {node[splaynode] {$v$}
            child {node[subtree] {$B$}}
            child {node[splaynode] {$w$}
                child {node[subtree] {$C$}}
                child {node[subtree] {$D$}}}};
    \end{scope}
    \node at (1.9,-3.2) {\small zig-zig};
\end{tikzpicture}
}
\end{minipage}\hfill
\begin{minipage}[t]{0.31\textwidth}
\centering
\resizebox{0.88\linewidth}{!}{%
\begin{tikzpicture}[level distance=8mm,sibling distance=12mm,baseline]
    \node[splaynode] {$w$}
        child {node[splaynode] {$v$}
            child {node[subtree] {$A$}}
            child {node[splaynode] {$u$}
                child {node[subtree] {$B$}}
                child {node[subtree] {$C$}}}}
        child {node[subtree] {$D$}};
    \node at (1.9,-1.2) {$\Rightarrow$};
    \begin{scope}[xshift=3.8cm]
    \node[splaynode] {$u$}
        child {node[splaynode] {$v$}
            child {node[subtree] {$A$}}
            child {node[subtree,xshift=-2mm] {$B$}}}
        child {node[splaynode] {$w$}
            child {node[subtree,xshift=2mm] {$C$}}
            child {node[subtree] {$D$}}};
    \end{scope}
    \node at (1.9,-3.2) {\small zig-zag};
\end{tikzpicture}
}
\end{minipage}
\vspace{-0.8em}
\caption{The left-child versions of Splay Tree rotations. The right-child versions are symmetric.}
\label{fig:splay-rotations}
\vspace{-1.5em}
\end{figure}

\begin{definition}[Static optimality]
Consider a sequence of $m$ successful access operations. Let $ac(x)$ be the number of accesses to key $x$. A search tree is \emph{statically optimal} if its total access cost is $O\left( \sum_x ac(x) \cdot \log (m / ac(x)) \right)$, or, equivalently, if the amortized cost of an access to key $x$ is $O(\log (m / ac(x)))$.
% This definition concerns access operations only; it does not include update operations.
\end{definition}

\vspace{-0.3cm}
\section{Rotation Procedure}
\label{sec:sequential}
\vspace{-0.2cm}

The simplest approach to avoid the bottleneck of the Splay Tree in the concurrent setting is to splay the accessed node only up to some fixed depth and perform rotations only up to some number of conflicts (i.e., when one rotation is overtaken by another concurrent rotation). Unfortunately, our experiments showed that such an approach does not work well due to at least two issues. First, we were unable to find universal constants that work well for every tree size and every number of working threads. Such tuning, of course, is undesirable. Second, it provides quite a large constant overhead on skewed workloads due to the inability to move nodes up the depth threshold. For example, on a Zipfian workload, the most commonly accessed element may be lower than the threshold, and thus will never be moved past the depth threshold, leading to a traversal overhead. So, the choice of the threshold is complicated: it should be large enough to reduce the bottleneck, but at the same time it should be small.

\vspace{-0.3cm}
\subsection{Version with exact counters}
\vspace{-0.2cm}

Since the trivial approach does not seem to work, we had to find another way to bound the depth up to which we rotate the node. We decided to choose this bound by trying to satisfy the static-optimality property: rotate up to the level $A \cdot \log (m / ac(x))$, where $m$ is the total number of accesses and $ac(x)$ is the number of accesses to the key. Unfortunately, with only that bound, the nodes lower than it will always be splayed leading to a large number of rotations. Thus, we decided to allow some slack and rotate only if the depth of the node is much larger than the static-optimality property requires, i.e., exceeds $B \cdot \log (m / ac(x))$.

% Our design is based on the rotations of Splay Tree: zig-zig, zig-zag, and zig. During the splay, the first two are usually applied and reduce the depth by two, while the last one happens only at the last moment to make the accessed node the root. We hide the choice of the corresponding rotation inside the rotate function.

\vspace{-1em}
\begin{center}
\begin{minipage}{\textwidth}
\begin{lstlisting}
void get(x):
  depth = 0 #\label{line:start-traversal}#
  node = root
  while node != null:
    if node.key = x:
      break
    depth += 1
    if node.key < x:
      node = node.right
    else:
      node = node.left #\label{line:end-traversal}#

  if node == null: #\label{line:not-found-1}#
    return null  #\label{line:not-found-2}#

  node.counter += 1 #\label{line:calculate:1}#
  global_counter += 1
  target = log(global_counter / node.counter) #\label{line:calculate:2}#

  // We allow the slack
  if depth < B * target: #\label{line:slack:1}#
    return node.value #\label{line:slack:2}#

  if rand() > PTHRESHOLD: #\label{line:random-rebalance}#
    return node.value #\label{line:random-stop}#

  // Rotate only up to some level
  while depth > A * target: #\label{line:rotation:1}#
    rotate(node)
    depth -= 2 #\label{line:rotation:2}#
  // zig-zig and zig-zag reduce depth
  // by 2. zig happens only at the root.
\end{lstlisting}
\vspace{-2.0em}
\captionof{lstlisting}{The code of the access (\texttt{get}) operation.}
\label{alg:original}
\vspace{-0.2cm}
\end{minipage}
\end{center}

First, in Lines~\ref{line:start-traversal}-\ref{line:end-traversal}, we traverse the tree to find a node with the requested key and simultaneously calculate the depth of the target node. If the key is not found, we return \texttt{null} (Lines~\ref{line:not-found-1}-\ref{line:not-found-2}). Otherwise, we increment the global counter and the key counter, and calculate the target value of the static-optimal complexity (Lines~\ref{line:calculate:1}-\ref{line:calculate:2}). If the node has depth less than some constant $B$ times the target complexity, we end the operation (Lines~\ref{line:slack:1}-\ref{line:slack:2}). Then, we generate a random number to decide whether we need to perform rotations (Line~\ref{line:random-rebalance}). If we decide not to, we return the value (Line~\ref{line:random-stop}). Otherwise, we rotate the node toward the root until its depth becomes less than $A$ times the target (Lines~\ref{line:rotation:1}-\ref{line:rotation:2}). Note that during splaying, usually zig-zig and zig-zag rotations are used, reducing the depth by $2$. The zig rotation can happen only at the very top.

The concurrent implementation follows the exact same procedure. We have three notes about it: 1)~increments of the counters are implemented using the fetch-and-add operation; 2)~the increments to the global counter may lead to contention, which will be fixed in the second design; and 3)~the depth may not be calculated precisely and the node may not be rotated to the expected place due to concurrent rotations.

\vspace{-0.3cm}
\subsection{Static-optimality proof}
\label{sec:proof}
\vspace{-0.2cm}

In this subsection, we prove that the sequential tree with our rotation design provides static optimality. To prove it, we require a tree to be initially filled with all keys, and to serve only access, i.e., \texttt{get}, requests.

Let $w(u)$ be the number of accesses to the node at the end, let $\sigma(u)$ be the sum of weights of all nodes in the subtree of $u$, and let the rank be $\rho(u) = \log \sigma(u)$.
Suppose that we decide to make a rotation with probability $p$ (\texttt{PTHRESHOLD}).
Thus, we will introduce a potential function $\Phi = (1/p + d) \cdot \sum \rho(u)$ where $d$ is the cost of one rotation.

\vspace{-0.2cm}
\begin{lemma}
\label{lem:height}
The expected amortized time incurred by the $i$-th operation is $O(\log(AC_i / ac_i(t)) + \log(m / ac_m(t)))$, where $t$ is the node found by the operation, $ac_i(t)$ is the total number of accesses to $t$, $AC_i$ is the total number of accesses before the $i$-th access, and $m$ is the total number of requests.
\end{lemma}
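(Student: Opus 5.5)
We bound the expected actual cost of the $i$-th access plus the expected change of $\Phi = (1/p + d)\sum_u \rho(u)$. The weights $w(u)$ are the final access counts, so they are fixed throughout. Let $D$ be the depth of $t$ and $T = \log(AC_i/ac_i(t))$ the value computed in Line~\ref{line:calculate:2} (up to the $+1$ from incrementing the counters). We split into three cases according to Lines~\ref{line:slack:1} and~\ref{line:random-rebalance}.

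\textbf{No rotation, $D < B\cdot T$.} The cost is $O(D) = O(B\cdot T)$ and $\Phi$ does not change. The bound $O(\log(AC_i/ac_i(t)))$ follows.

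\textbf{No rotation, $D \ge B\cdot T$, coin says stop.} Here we use the standard access-lemma bound on depth in a tree with weights $w$. Every ancestor $v$ of $t$ with a child $c$ on the path satisfies $\sigma(v) \ge \sigma(c) + w(v)$. The cleaner bound is the classical one for the cost of splaying $t$ to the root: amortized cost $3(\rho(\mathrm{root}) - \rho(t)) + 1 \le 3\log(m/w(t)) + 1 = O(\log(m/ac_m(t)))$.

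We do not splay in this case, though. So we argue in expectation: conditioned on $D \ge B\cdot T$, with probability $p$ we perform rotations that lift $t$ from depth $D$ to about $A\cdot T$. That is $\Theta(D - A\cdot T) = \Theta(D)$ rotation steps when $B > A$ (say $B \ge 2A$), because then $D - A\cdot T \ge D(1 - A/B)$. The splay-access lemma applied to this partial splay gives
\[
\text{number of double rotations} \le (1/\text{const})\bigl(3(\rho'(t) - \rho(t)) - \Delta\Phi_0\bigr),
\]
where $\Phi_0 = \sum\rho$. Each double rotation on $t$ with parent $x$ and grandparent $y$ satisfies $\Delta\Phi_0 + 2 \le 3(\rho'(t) - \rho(t))$, and these terms telescope to $3(\rho'(t) - \rho(t)) \le 3\log(m/ac_m(t))$.

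\textbf{Combining the two branches after the slack test.} Write $k$ for the number of double rotations. The expected real cost is at most $D + p\cdot d\cdot k$. The expected potential change is $p(1/p + d)\Delta\Phi_0 \le p(1/p + d)\bigl(3\log(m/ac_m(t)) - 2k\bigr)$. Since $k = \Omega(D)$, the negative term $-(1 + pd)\cdot 2k$ pays for the traversal $D$ and for the rotation cost $p\,d\,k$, after choosing constants appropriately. This is exactly why the potential carries the factor $1/p$: the single random rotation event must pay for the traversal cost incurred in every execution. What remains is $(1 + pd)\cdot 3\log(m/ac_m(t)) = O(\log(m/ac_m(t)))$, treating $p$ and $d$ as constants.

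\textbf{Main obstacle.} The delicate step is showing $k \ge cD$ for a constant $c$, using $D \ge B\cdot T$ and the stopping rule $D - 2k \le A\cdot T$. This gives $2k \ge D - A\cdot T \ge D(1 - A/B)$, so we need $B > A$. Terminal zig steps are handled by an extra $+1$ in the access-lemma bound.

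Summing the three cases gives the expected amortized bound $O(\log(AC_i/ac_i(t)) + \log(m/ac_m(t)))$.
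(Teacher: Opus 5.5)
Your argument is correct in substance and uses the paper's ingredients. These are the potential $(1/p+d)\sum_u\rho(u)$ with final access counts as weights, and the observation that the traversal is paid in every execution while the rank drop happens only with probability $p$, so its expected contribution is $(1+pd)\Delta\Phi_0$. The paper also uses the Sleator--Tarjan bound for zig-zig/zig-zag steps, telescoping to $3(\rho(s)-\rho(t))\le 3\log(m/ac_m(t))$, and disposes of the case $D<B\cdot T$ directly, as you do.

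Where you differ is in how the traversal is charged. The paper cuts the path at the node $s$ at depth $A\cdot T$ and charges the top part directly to $O(\log(AC_i/ac_i(t)))$. It then matches each of the remaining $D-A\cdot T\approx 2k$ edges with one rotation, so a double step costs $2(1+pd)+(1+pd)\Delta\le 3(1+pd)(\rho'(u)-\rho(u))$, where $\Delta$ is the rank change of the step.

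You instead try to pay for all of $D$ from the drop $-2(1+pd)k$, using only $2k\ge(1-A/B)D$. With $\Phi$ fixed as you defined it, the traversal plus rotation cost minus that part of the drop is $D+2pdk-2(1+pd)k=D-2k$. From $2k\ge(1-A/B)D$ alone, this is bounded only by $(A/B)D$, which is not $O(T)$. So ``choosing constants appropriately'' would have to mean rescaling $\Phi$ by a factor of about $B/(B-A)$, and it forces the extra hypothesis $B>A$.

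Your ``main obstacle'' is therefore self-imposed. You already state the stopping rule $D-2k\le A\cdot T$, and plugging it in directly gives a net cost of at most $A\cdot T$. That needs no rescaling and no condition relating $A$ and $B$, and it is exactly the paper's split at $s$. A minor point: a double step is two rotations, so the expected rotation cost is $2pdk$ rather than $pdk$; the cancellation above uses the correct count.
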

\begin{proof}
Let us start with the simplest case. If the depth of the node does not exceed $B \cdot \log AC_i / ac_i(t)$ by the condition in Line~\ref{line:slack:1}, the statement follows directly.

Now, we consider the case when the depth exceeds that bound.
Let us calculate the expected cost of the change of the potential and the cost of the operation.

We split an access operation into two parts. In the first part, we traverse to node $s$ at depth $A \cdot \log(AC_i / ac_i(t))$ (adding that cost). Then, we traverse to the target node $t$ and splay it up to $s$ with some probability. For simplicity, we assume that $t$ takes the place of $s$ and not a slightly higher position due to the parity of the depth.

For the second part, we repeat a standard potential-based proof for Splay Tree from \cite{albers2002randomized} and get the bound $\log (m / ac_m(t))$. We provide the full version in the Appendix.
\end{proof}

The main result follows from this lemma almost straightforwardly~--- we just need to replace $AC_i / ac_i(t)$ by $m / ac_m(t)$. The full proof is in the Appendix.

\begin{theorem}
A tree with such rotations serves accesses in $O(1/p \cdot \sum_x ac(x) \cdot \log (m / ac(x)))$ amortized time, where $m$ is the total number of accesses and $ac(x)$ is the total number of accesses to $x$.
\end{theorem}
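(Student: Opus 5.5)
We plan to sum the bound of Lemma~\ref{lem:height} over all $m$ accesses and then control the total potential drop. Writing $t_i$ for the node found by the $i$-th access, linearity of expectation and Lemma~\ref{lem:height} bound the total expected amortized cost by
\[
O\Bigl(\sum_{i=1}^{m} \log\bigl(AC_i/ac_i(t_i)\bigr) + \sum_{i=1}^{m}\log\bigl(m/ac_m(t_i)\bigr)\Bigr).
\]
Grouping the second sum by key gives $\sum_x ac(x)\log(m/ac(x))$ at once, since $ac_m(x)=ac(x)$. The real cost is then the amortized cost minus the net potential change, $\Phi_0-\Phi_m$. All weights are the final counts $w(u)=ac(u)$, so $\sigma(u)\le m$ for every node. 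A node $u$ with $w(u)>0$ satisfies $\sigma(u)\ge w(u)$. Hence the drop in $\sum_u\rho(u)$ is at most $\sum_x \log(m/ac(x))$ over accessed keys, which is at most $\sum_x ac(x)\log(m/ac(x))$. We will treat never-accessed keys as weight $1$, or simply ignore them, as is standard. Multiplying by $(1/p+d)$ gives the factor $1/p$ in the statement, with $d$ a constant.

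The main step is to show that the first sum is also $O(\sum_x ac(x)\log(m/ac(x)))$. We fix a key $x$ with $k=ac(x)$ accesses, occurring at global times $i_1<\dots<i_k$. At its $j$-th access we have $ac_{i_j}(x)=j$, counting the current access as the code does, and $AC_{i_j}\le m$. The contribution of $x$ is therefore at most
\[
\sum_{j=1}^{k}\log(m/j) = k\log m - \log k! .
\]
By Stirling, $\log k!\ge k\log k - k\log e$, so the contribution is at most $k\log(m/k)+O(k)$.

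Summing over $x$ bounds the first sum by $\sum_x ac(x)\log(m/ac(x)) + O(m)$. The $O(m)$ term is absorbed because each access costs at least $1$, equivalently $\log(m/ac(x))\ge 0$ and we may take logarithms as $\log(2m/ac(x))$. This is the step where replacing $AC_i/ac_i(t)$ by $m/ac_m(t)$ is not literally termwise valid, since $ac_i(t)\le ac_m(t)$ goes in the wrong direction. The expected obstacle is exactly this point, and the harmonic/Stirling grouping per key is what fixes it.

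Finally, we combine the three pieces: the summed amortized bound, the bound on the potential drop, and the Stirling estimate. Together they give total expected cost
\[
O\Bigl(\bigl(1/p\bigr)\sum_x ac(x)\log\bigl(m/ac(x)\bigr)\Bigr).
\]
The $1/p$ factor comes from the potential scaling used inside Lemma~\ref{lem:height}. We will also check that the depth-slack case (Line~\ref{line:slack:1}) and the rounding caused by parity in the rotation loop only add constants per access, which are absorbed as above.
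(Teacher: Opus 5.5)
Your proposal is correct and follows the paper's own proof. Both arguments bound the potential drop by $(1/p+d)\sum_x \log(m/ac(x))$ using final access counts as weights, then group the Lemma's first term by key (at the $j$-th access to $x$ the counter is $j$ and the global count is at most $m$), which reduces it to $k\log m-\log k!$.

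Your estimate $\log k!\ge k\log k-k\log e$ is in fact the step that is needed here. The paper's bound $\log k!\ge (k/2)\log(k/2)$ leaves an extra $(k/2)\log m$ term that is not $O(k\log(m/k))$. You also make explicit the additive $O(m)$, reading the bound with $\log(2m/ac(x))$, which the paper leaves implicit.
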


\vspace{-0.3cm}
\subsection{Version with approximate counters}
\vspace{-0.2cm}

The second version of our rotation design is not much different from the one above. Instead of exact counters, we use the approximate counter from~\cite{morris1978counting} for global and node counters: on an access, we increment a counter with value $r$ with probability $\frac{1}{2^r}$ and return $2^r$. The Morris counter gives a compact estimate whose expectation is within a constant-factor scale of the true count. This counter needs just $6$ bits to approximate a $64$-bit counter.
% We use this simple approach in our code: the code from Listing~\ref{alg:approximate} should replace Lines~\ref{line:calculate:1}-\ref{line:calculate:2} of Listing~\ref{alg:original}.
The pseudocode is shown in Listing~\ref{alg:approximate}.

Unfortunately, the basic Morris counter does not provide direct static-optimality bounds due to nontrivial variance. Instead, one can use the improvement presented in~\cite{nelson2022optimal}. Nevertheless, we use the basic approach in the implementation.

\vspace{-0.3cm}
\section{Experiments}
\label{sec:experiments}
\vspace{-0.2cm}

We implemented our rotation design on top of the concurrent AVL tree by Bronson et al.~\cite{bronson2010practical} and obtained two data structures: Splay-like and Approximate-Splay-like. We compare their performance against the AVL tree by Bronson et al.~\cite{bronson2010practical} and its adaptive version, CBTree~\cite{CBTree}. We implemented the code in Java. We did not compare with Splay-List~\cite{aksenov2020splay} since it was written in C++.

As noted, the issue with the simplest concurrent Splay Tree is fixing the best constants. Fortunately, we were able to find good generic parameters for our data structures. For the Splay-like tree, we set the probability of splaying to $\frac{1}{20 \cdot T}$, where $T$ is the number of threads, the upper bound constant is $B = 2.5$, and the lower bound constant is $A = 0.7$. For the Approximate-Splay-like tree, we set the probability to $1$, $B = 2$, and $A = 0.5$. We also had to update CBTree to rotate with some probability in order to improve its performance. The probability was set to $\frac{1}{10 \cdot T}$.

For the workloads, we chose ones similar to the workloads presented in the Splay-List paper~\cite{aksenov2020splay}. First, we fix the range to $10^6$ elements. Then, we fix five distributions: 1)~\texttt{uniform}~--- a key is chosen uniformly; 2)~\texttt{zipfian}~--- a key is chosen from a Zipfian distribution with $\alpha = 1$ over a preliminarily shuffled set; 3)~\texttt{99/1}, \texttt{95/5}, and \texttt{90/10}~--- \texttt{x/y} means that we choose a set of $x\%$ random elements from the range, and the key is chosen with probability $y\%$ from this set and otherwise from the rest. We also choose two types of workloads. In read-only workloads, the structure is pre-filled with all the keys, and \texttt{get} operations choose the key from the distribution. In update workloads, the data structure is pre-filled with a random half of the range, and the operation is \texttt{get} with probability $80\%$, with the key from the distribution, or \texttt{insert}/\texttt{remove} with probability $10\%$ each, with the key taken uniformly from the range.

The experiments were run three times on a machine with four x86 chips with 16 cores each (64 cores in total) and 256 GB of RAM, with each run consisting of 20 seconds of warmup and 20 seconds of measured execution. The throughput shown in the plots was averaged. The code was written in Java and compiled with OpenJDK 21.0.11.

Figure~\ref{fig:read-throughput} shows the results of the experiments on the read-only workloads. As one can see, the original AVL outperforms the other data structures on lower-skew workloads, i.e., \texttt{uniform} and \texttt{90/10}, and on Zipfian workloads due to its static structure and, thus, better cache performance. On higher regular skew, \texttt{95/5} and \texttt{99/1}, our trees work similarly to or better than the original AVL. Please note that our approximate-counter version works similarly to the exact-counter version. CBTree performs worse than expected in our implementation, even though we followed the pseudocode from the paper~\cite{CBTree} and tuned for the best parameters.

Figure~\ref{fig:update-throughput} shows the results of the experiments on the update workloads. This time, AVL has worse cache usage, and our data structures outperform it on almost all workloads while working a little worse on the uniform workload. Our approximate-counter version works slightly better than the exact-counter version.

\begin{figure}[!t]
\centering
\begin{subfigure}[t]{0.32\textwidth}
    \centering
    \includegraphics[width=\linewidth]{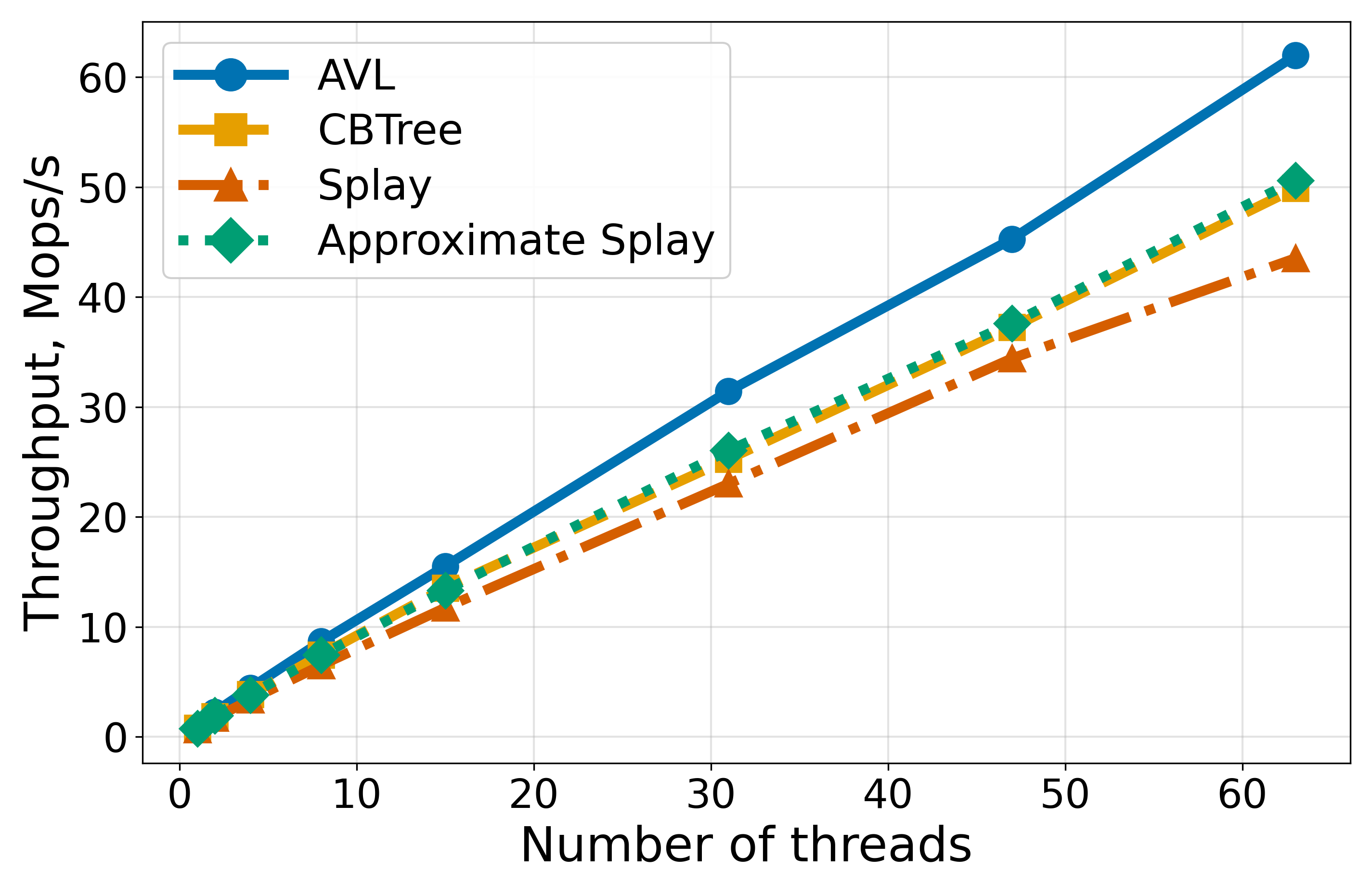}
    \caption{\texttt{uniform}}
\end{subfigure}
\hfill
\begin{subfigure}[t]{0.32\textwidth}
    \centering
    \includegraphics[width=\linewidth]{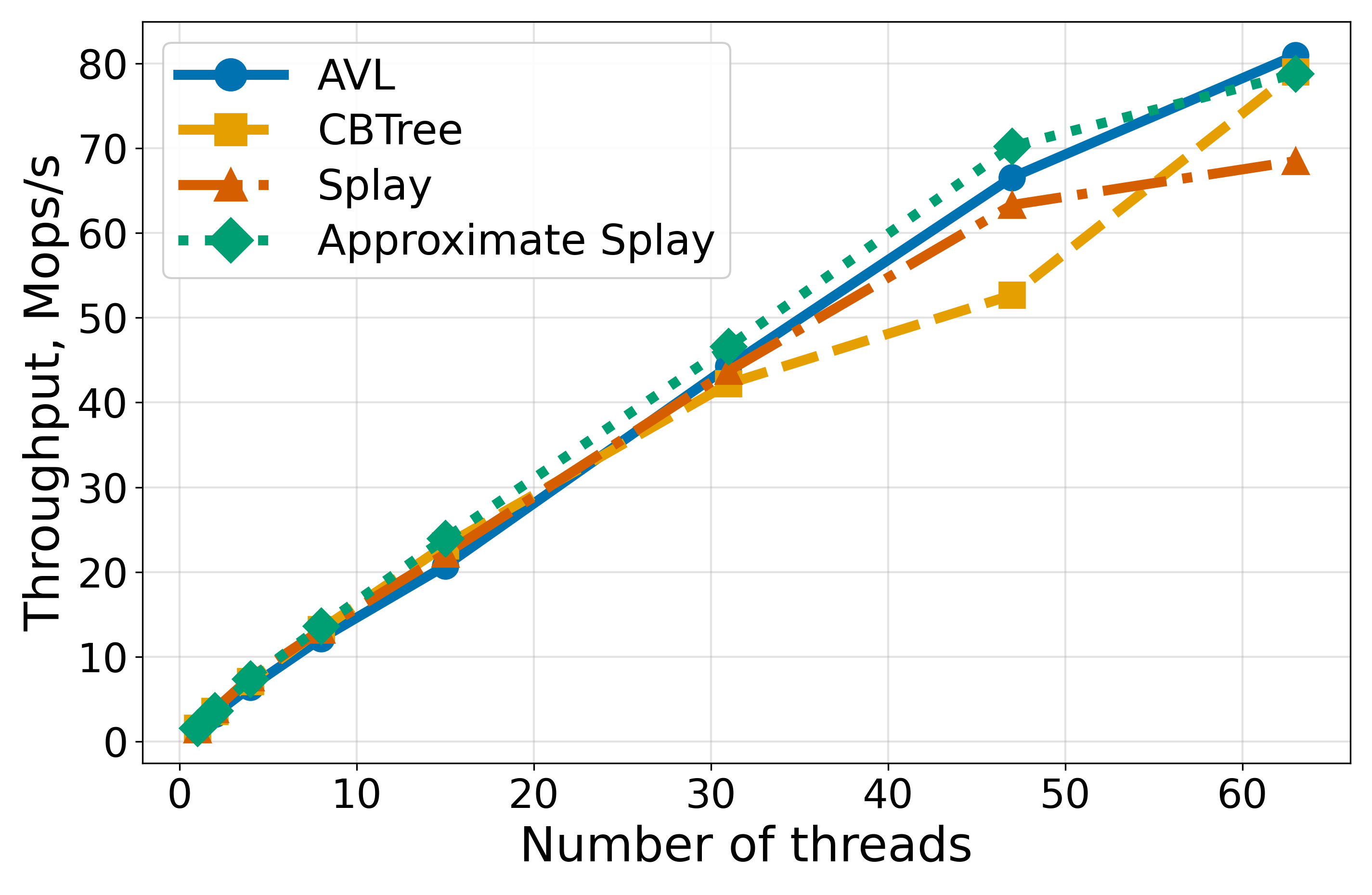}
    \caption{\texttt{zipfian}}
\end{subfigure}
\hfill
\begin{subfigure}[t]{0.32\textwidth}
    \centering
    \includegraphics[width=\linewidth]{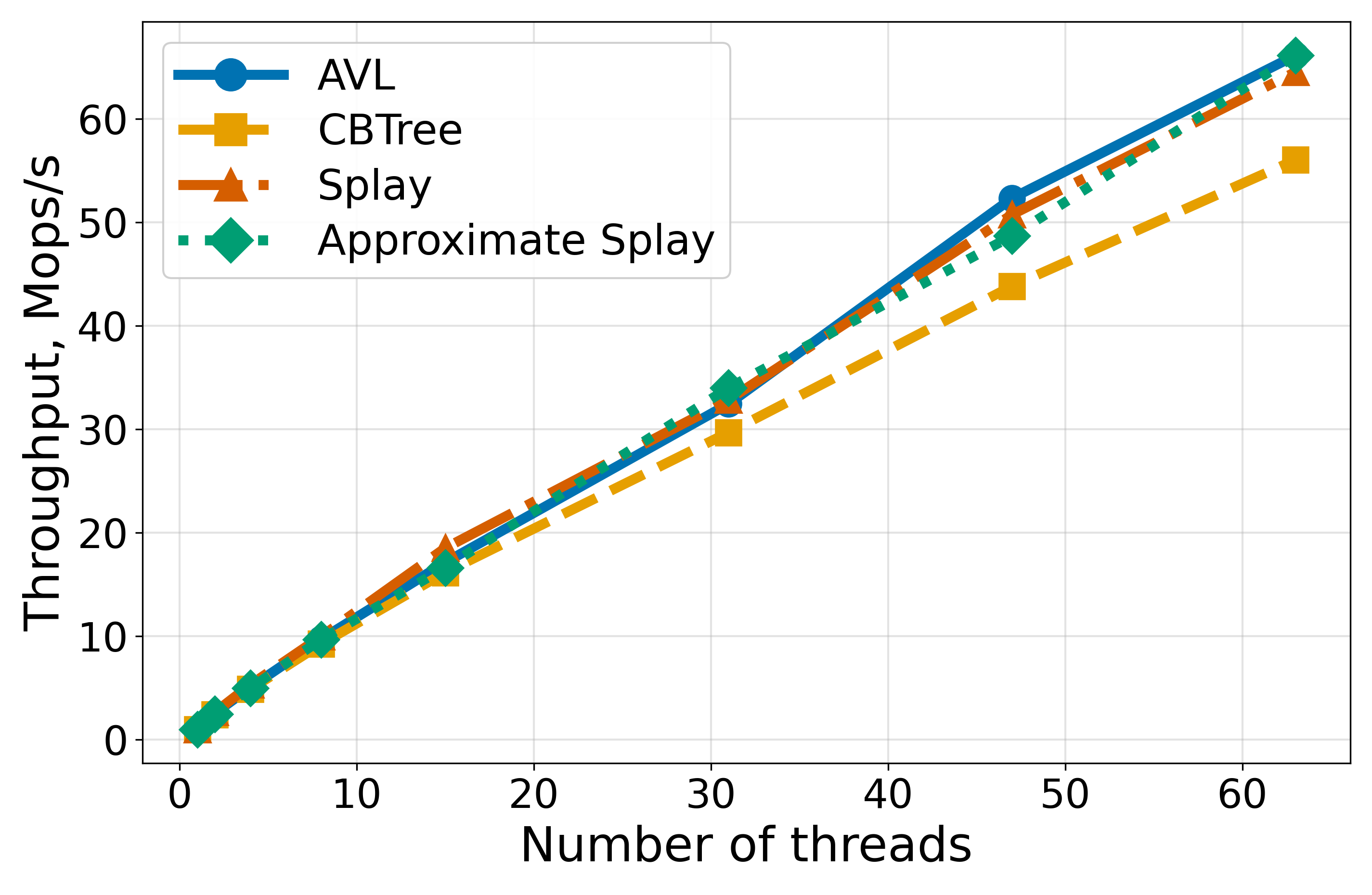}
    \caption{\texttt{90/10}}
\end{subfigure}

\vspace{0.2em}

\begin{subfigure}[t]{0.32\textwidth}
    \centering
    \includegraphics[width=\linewidth]{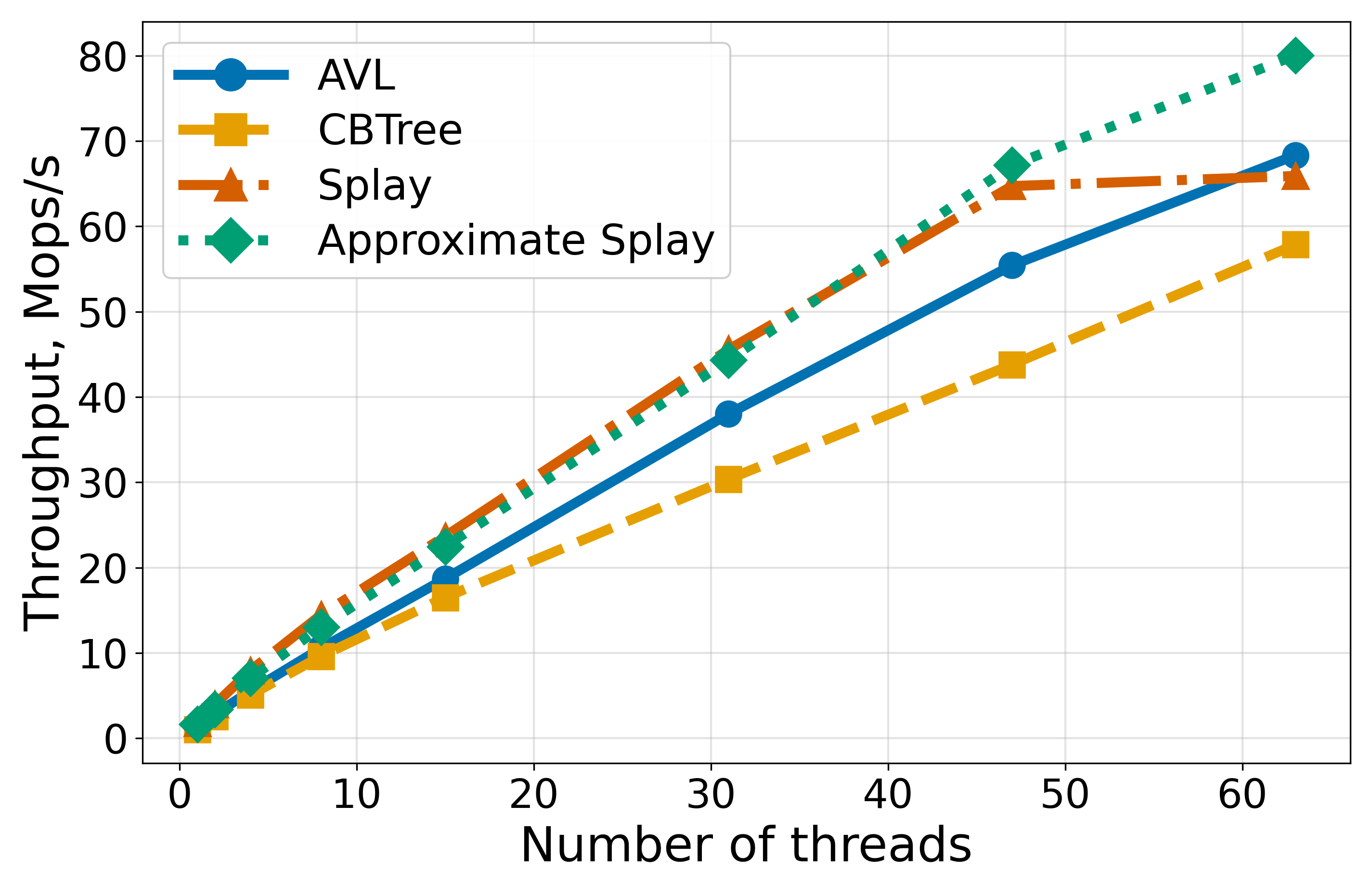}
    \caption{\texttt{95/5}}
\end{subfigure}
\hspace{0.08\textwidth}
\begin{subfigure}[t]{0.32\textwidth}
    \centering
    \includegraphics[width=\linewidth]{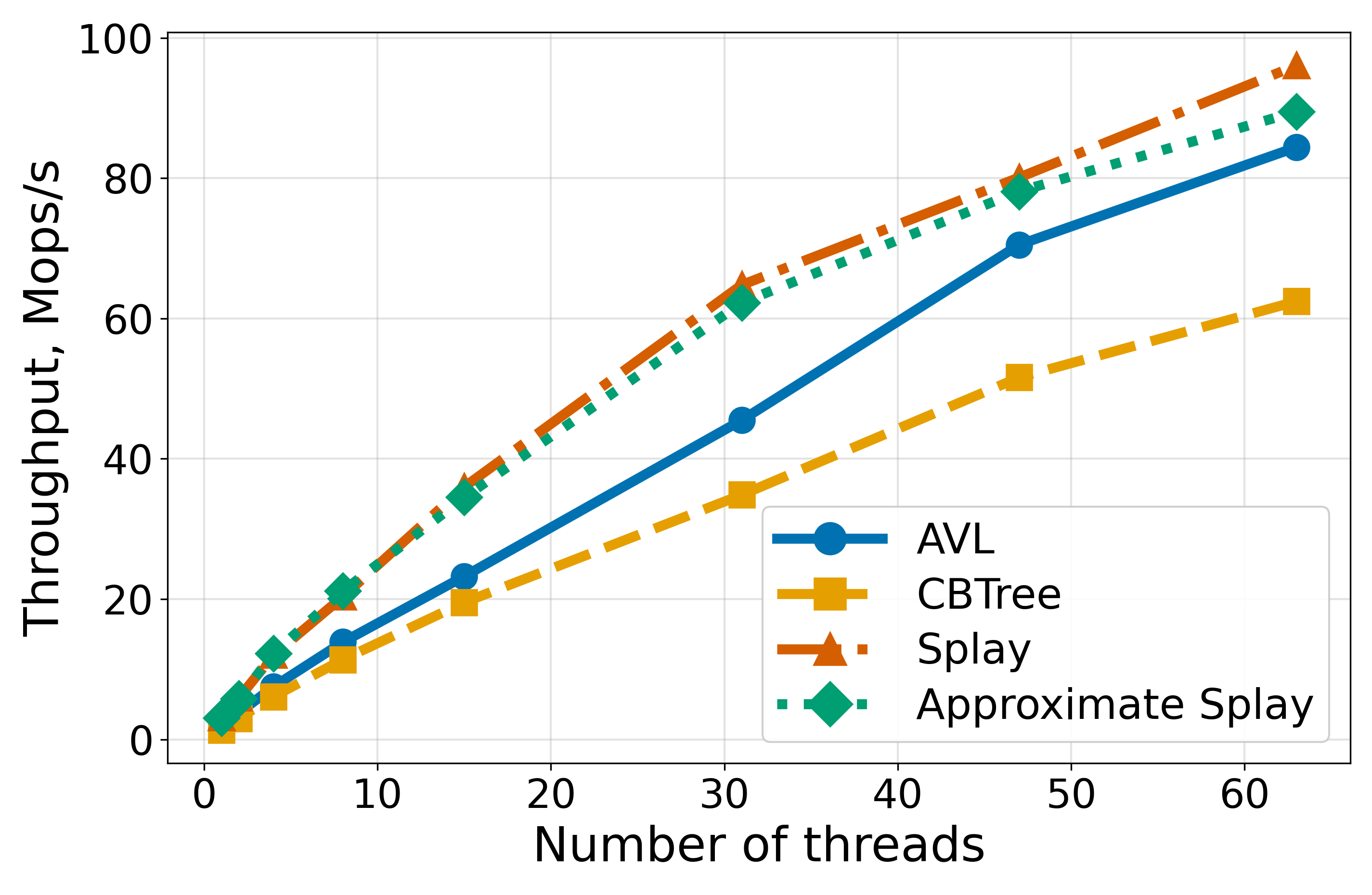}
    \caption{\texttt{99/1}}
\end{subfigure}
\vspace{-0.3cm}
\caption{Throughput of the implementations on read-only workloads with range size $10^6$.}
\label{fig:read-throughput}
\vspace{-0.3cm}
\end{figure}

\begin{figure}[!t]
\centering
\begin{subfigure}[t]{0.32\textwidth}
    \centering
    \includegraphics[width=\linewidth]{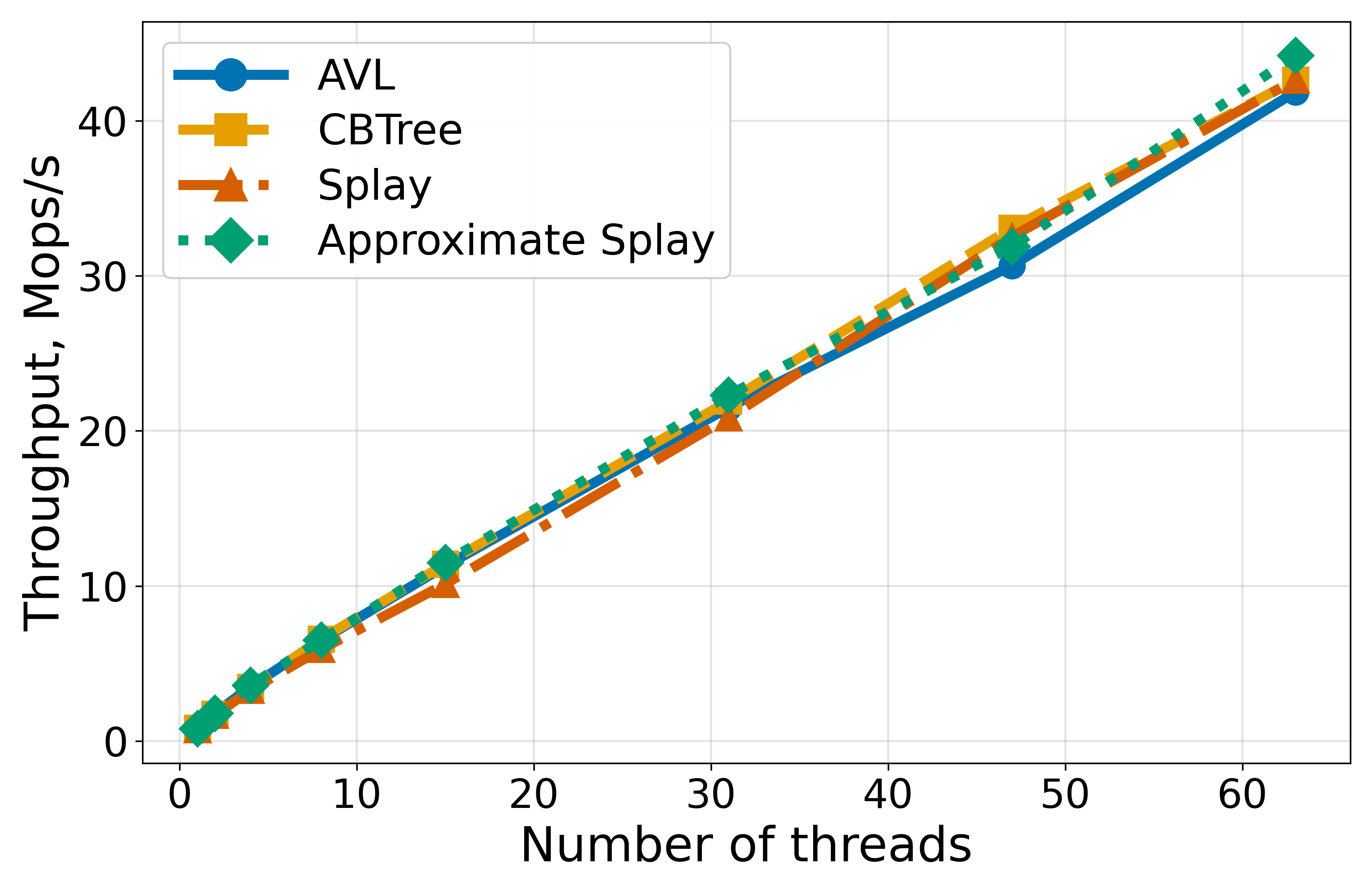}
    \caption{\texttt{uniform}}
\end{subfigure}
\hfill
\begin{subfigure}[t]{0.32\textwidth}
    \centering
    \includegraphics[width=\linewidth]{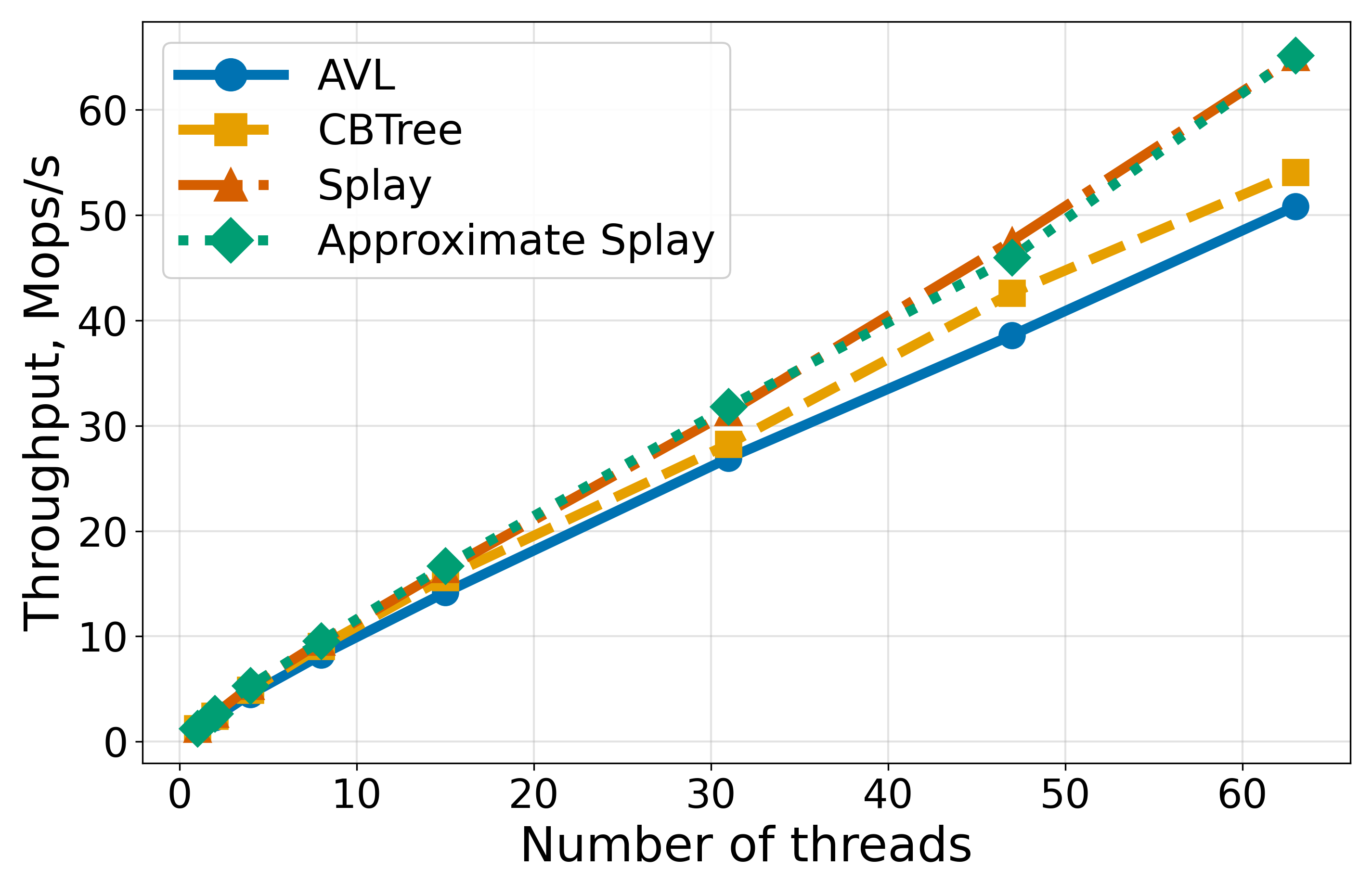}
    \caption{\texttt{zipfian}}
\end{subfigure}
\hfill
\begin{subfigure}[t]{0.32\textwidth}
    \centering
    \includegraphics[width=\linewidth]{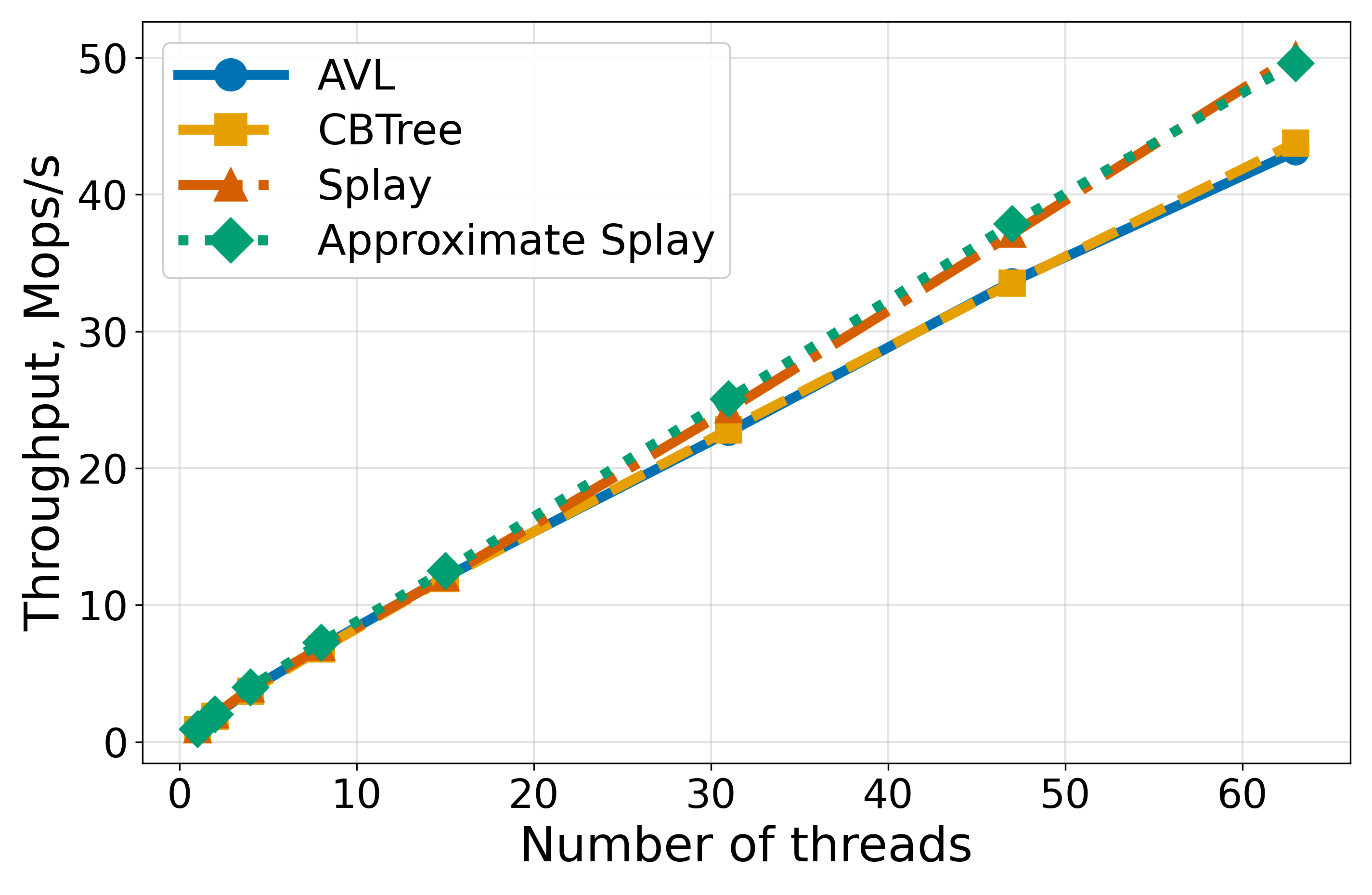}
    \caption{\texttt{90/10}}
\end{subfigure}

\vspace{0.2em}

\begin{subfigure}[t]{0.32\textwidth}
    \centering
    \includegraphics[width=\linewidth]{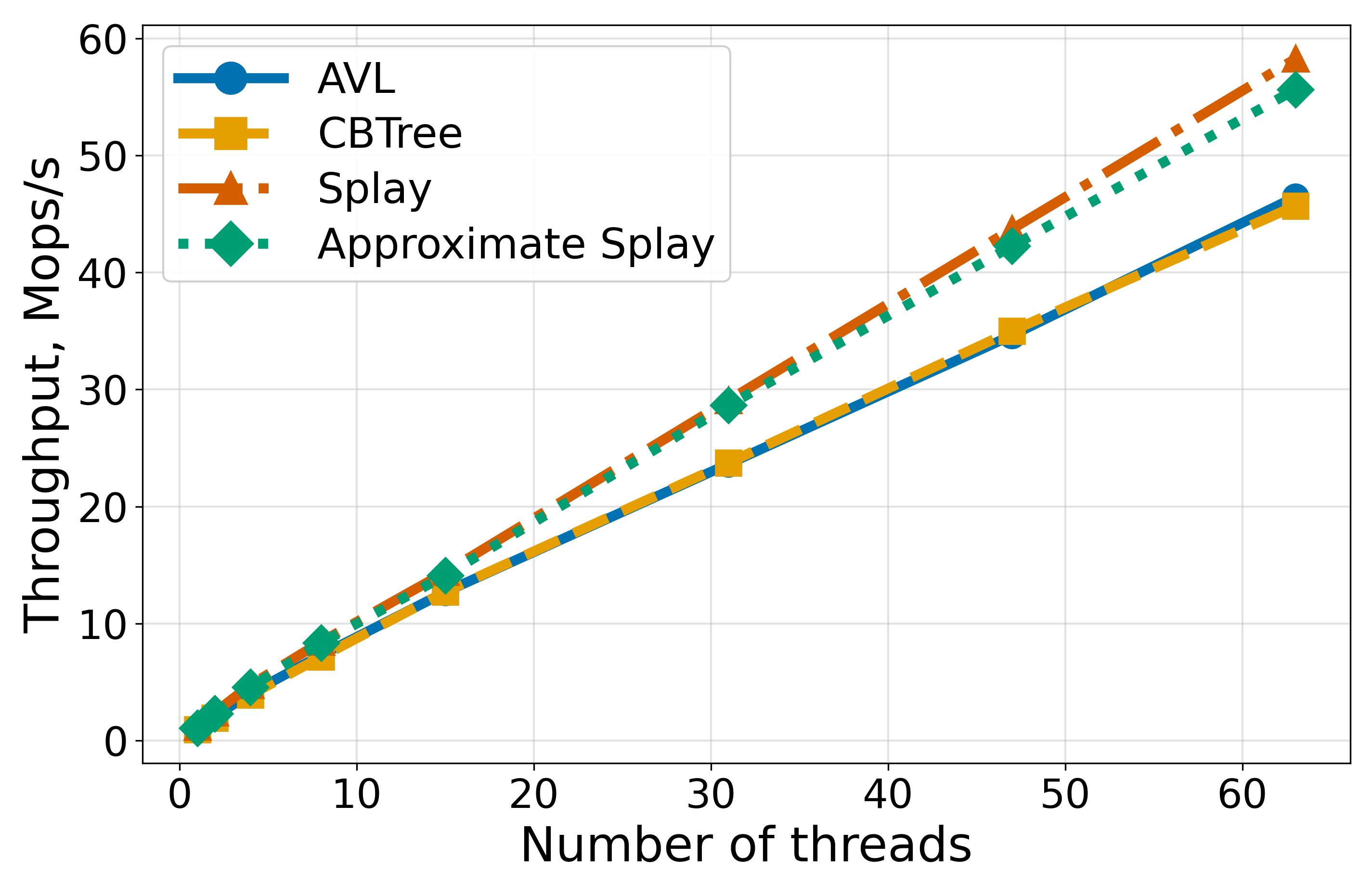}
    \caption{\texttt{95/5}}
\end{subfigure}
\hspace{0.08\textwidth}
\begin{subfigure}[t]{0.32\textwidth}
    \centering
    \includegraphics[width=\linewidth]{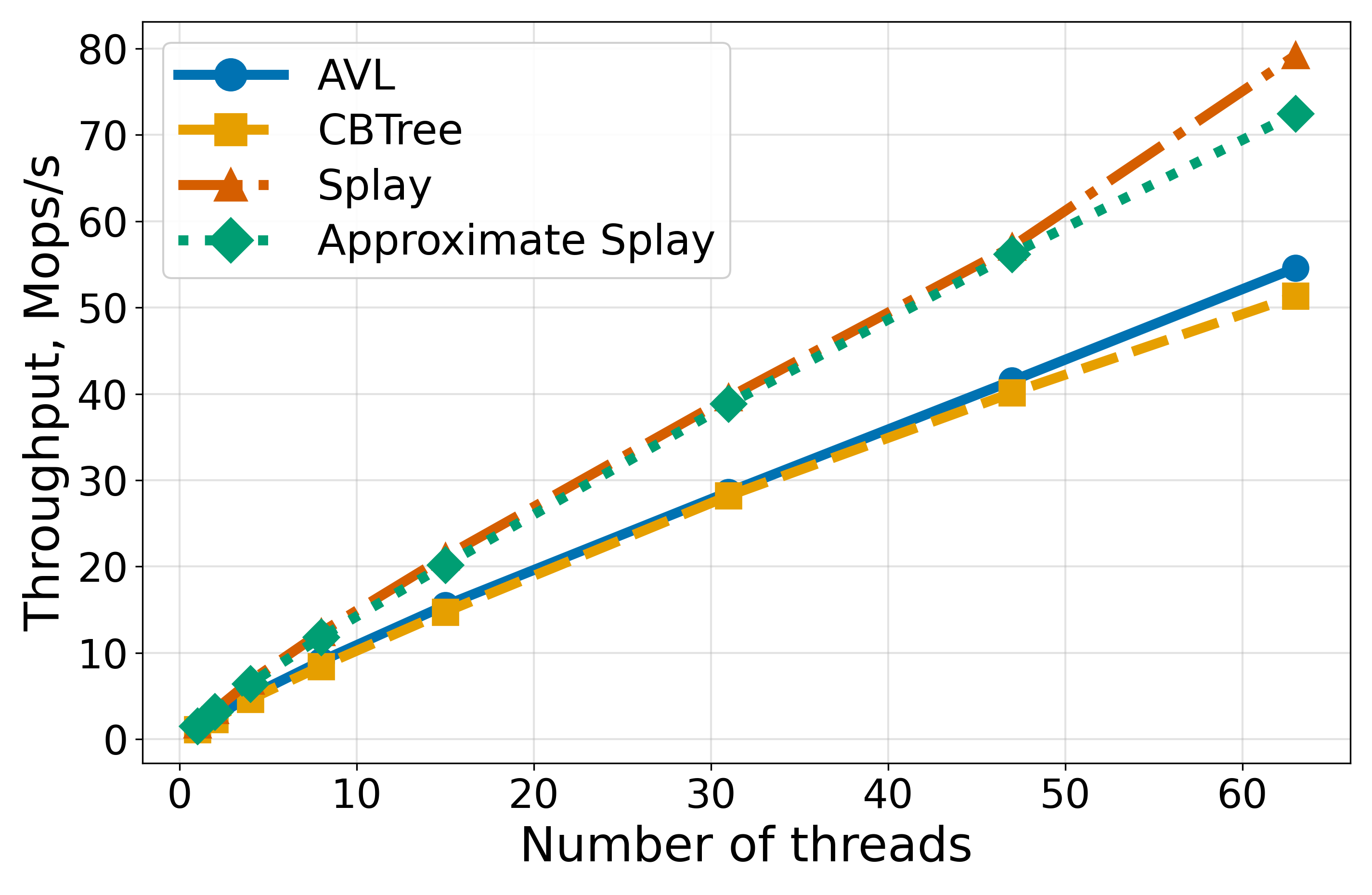}
    \caption{\texttt{99/1}}
\end{subfigure}
\vspace{-0.3cm}
\caption{Throughput of the implementations on update workloads.}
\label{fig:update-throughput}
\vspace{-0.5cm}
\end{figure}

\newpage
\bibliography{references.bib}

\appendix

\section{Deferred proofs}

\begin{lemma}
\label{lem:height}
The expected amortized time incurred by the $i$-th operation is at most $O(\log(AC_i / ac_i(t)) + \log(m / ac_m(t)))$, where $t$ is the node found by the operation, $ac_i(t)$ is the total number of accesses to $t$, $AC_i$ is the total number of accesses before the $i$-th access, and $m$ is the total number of requests.
\end{lemma}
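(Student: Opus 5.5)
We argue with the potential $\Phi = (1/p + d)\sum_u \rho(u)$, where $\rho(u)=\log\sigma(u)$ and the weights $w(u)$ are the final access counts. Since the weights are fixed from the start, $\sigma$ and $\rho$ are static functions of the tree shape. Hence $\Phi$ changes only when a rotation occurs, and a search costs nothing in potential. Throughout, the $i$-th operation finds node $t$ at depth $D$. We split into cases according to Line~\ref{line:slack:1} and Line~\ref{line:random-rebalance}.

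\emph{Case 1: $D < B\log(AC_i/ac_i(t))$.} No rotation is done, so the actual cost is $O(D)=O(\log(AC_i/ac_i(t)))$ and $\Delta\Phi=0$. This gives the first term of the bound.

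\emph{Case 2: $D \ge B\log(AC_i/ac_i(t))$.} Write $L = A\log(AC_i/ac_i(t))$ and let $s$ be the ancestor of $t$ at depth about $L$.
\begin{itemize}
\item The traversal cost up to $s$ is $O(L)$, which is again charged to the first term.
\item The remaining path from $s$ to $t$ has length $D-L \ge (B-A)\log(AC_i/ac_i(t))$, using $B>A$. This part is paid for only with probability $p$ by rotations, but it is always paid by the search. So its expected actual cost is $(D-L)(1+pd')$ for some constant $d'$ measuring rotation work.
\item With probability $p$ we splay $t$ up to the position of $s$. We treat this as splaying inside the subtree rooted at $s$ and apply the standard access lemma of Sleator--Tarjan, as in the randomized analysis of \cite{albers2002randomized}. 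Each zig-zig or zig-zag step changes $\sum\rho$ by at most $3(\rho'(t)-\rho(t)) - 2$. Telescoping up to $s$ gives
\[
\Delta\textstyle\sum\rho \le 3(\rho(s)-\rho(t)) - 2\lfloor (D-L)/2\rfloor + O(1).
\]
At most one zig is needed, handled by the $O(1)$ term, and parity issues are absorbed the same way.
\end{itemize}
Multiplying by $p$ and by the coefficient $(1/p+d)$, the expected potential change is at most
\[
(1+pd)\bigl(3(\rho(s)-\rho(t)) - (D-L) + O(1)\bigr).
\]
The negative term $-(1+pd)(D-L)$ cancels the expected path cost $(D-L)(1+pd)$, after fixing the constant $d$ as the per-step rotation cost, as intended by the choice of $\Phi$. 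This cancellation is the reason for the $1/p$ factor.

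It then remains to bound $\rho(s)-\rho(t)$. We have $\rho(s)\le \log W$ with $W=\sum_u w(u)=m$, and $\rho(t)\ge \log w(t)=\log ac_m(t)$. So $3(1+pd)(\rho(s)-\rho(t)) = O(\log(m/ac_m(t)))$, since $p\le 1$. Summing the cases gives expected amortized cost $O(\log(AC_i/ac_i(t)) + \log(m/ac_m(t)))$.

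The main obstacle is the bookkeeping of the probability $p$. The search cost along $s\to t$ is paid with probability $1$, but the potential drop that pays for it occurs only with probability $p$. The scaling $1/p$ in $\Phi$ must exactly make up for this. One must also check that rotations below $s$ do not change ranks of nodes outside the subtree of $s$. This holds because rotating within a subtree preserves the subtree's total weight, so $\sigma$ of every ancestor of $s$ is unchanged, and therefore splaying only to $s$ rather than to the root leaves the access-lemma telescoping valid.
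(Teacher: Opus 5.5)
Your proof is correct and follows essentially the same argument as the paper. It uses the same potential $\Phi=(1/p+d)\sum_u\rho(u)$ with final access counts as weights, the same split at the ancestor $s$ at depth $A\log(AC_i/ac_i(t))$, the Sleator--Tarjan step bound in the randomized form of \cite{albers2002randomized} telescoped up to $s$, and the same final estimate $\rho(s)\le\log m$, $\rho(t)\ge\log ac_m(t)$. Your version is slightly more careful than the paper's write-up: you keep the $-2$ per zig-zig/zig-zag step and use it explicitly to cancel the always-paid search cost $(1+pd)(D-L)$ below $s$, whereas the paper's per-rotation accounting states the step bound without the $-2$.
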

\begin{proof}
Let us start with the simplest case. If the depth of the node does not exceed $B \cdot \log AC_i / ac_i(t)$ by the condition in Line~\ref{line:slack:1}, the statement follows directly.

Now, we consider the case when the depth exceeds that bound.
Let us calculate the expected cost of the change of the potential and the cost of the operation.

We split the cost of an access into two parts. In the first part, we traverse to node $s$ at depth $A \cdot \log(AC_i / ac_i(t))$. Then, we traverse to the target node $t$ and splay it up to $s$ with some probability. For simplicity, we assume that $t$ takes the place of $s$ and not a slightly higher position due to the parity of the depth.

For the second part, we repeat the proof from \cite{albers2002randomized} and provide the full version in the Appendix.

Let $w(u)$ be some weight function on nodes, let $\sigma(u)$ be the sum of weights of all nodes in the subtree of $u$, and let the rank be $\rho(u) = \log \sigma(u)$.
Suppose that we decide to make a rotation with probability $p$ (\texttt{PTHRESHOLD}).
Thus, we will introduce a potential function $\Phi = (1/p + d) \cdot \sum \rho(u)$ where $d$ is the cost of one rotation.

The expected amortized cost of the second part is $E[T] + E[\Delta \Phi]$ where $T$ is the cost of traversal from $s$ and rotations to $s$ and $\Delta \Phi$ is the change in the potential. Note that the first part of an access up to $s$ does not change the potential and always is $O(\log(AC_i / ac_i(t)))$.

The algorithm performs rotations with probability $p$. Since the expectation of the sum is the sum of the expectations, we can consider each rotation separately. Each rotation consists of three parts: the initial cost of the traversal, the cost of the rotation times the probability $p$, and the change in the potential times the probability $p$. Thus, traversal plus rotation for each node is $1 + p \cdot d$.

Now, we will bound the change in the potential. Suppose we rotate a node $u$. Let $v$ and $w$ be its parent and grandparent, if exists.

\begin{enumerate}
\item \textbf{Zig rotation.}
If the edge is rotated, the change in the potential is $\rho'(u) + \rho'(v) - \rho(u) - \rho(v)$ because the ranks change only for $u$ and $v$.
Thus, the expected amortized cost is $(1 + pd) + p (1/p + d) \cdot (\rho'(u) + \rho'(v) - \rho(u) - \rho(v)) \leq (1 + pd) + (1 + pd) \cdot (\rho'(u) - \rho(u)) \leq C + 3 \cdot C \cdot (\rho'(u) - \rho(u))$. The inequalities hold since $\rho(v) \geq \rho'(v)$.

\item \textbf{Zig-zig and zig-zag rotations.}
There are two rotations, which lead to $2 (1 + pd)$, while the change in the potential is $\rho'(u) + \rho'(v) + \rho'(w) - \rho(u) - \rho(v) - \rho(w)$, because only the ranks of $u$, $v$, and $w$ are changed. Therefore, the expected amortized cost does not exceed $2 (1 + pd) + p (1/p + d) \cdot (\rho'(u) + \rho'(v) + \rho'(w) - \rho(u) - \rho(v) - \rho(w)) \leq (1 + pd) \cdot (2 + \rho'(u) + \rho'(v) + \rho'(w) - \rho(u) - \rho(v) - \rho(w))$. Using the same technique as in~\cite{sleator1985self}, we can show that $\rho'(u) + \rho'(v) + \rho'(w) - \rho(u) - \rho(v) - \rho(w) \leq 3 \cdot (\rho'(u) - \rho(u))$. Hence, the amortized cost of that rotation is $3 \cdot (1 + pd) \cdot (\rho'(u) - \rho(u))$.
\end{enumerate}

Finally, the expected amortized cost of traversal from $s$ to $u$ and rotations equals the sum over all rotations. This sum is bounded by $3 \cdot (1 + pd) \cdot (\rho'(t) - \rho(t)) = 3 \cdot (1 + pd) \cdot (\rho(s) - \rho(t))$, since $t$ takes the place of $s$.

To finish the proof, we fix the weight function $w$ of nodes as the number of accesses to the corresponding key at the end. This gives the required bound $3 \cdot (1 + pd) \cdot \log (m / ac_m(t))$ since $\rho(s) \leq \log m$.
\end{proof}

\begin{theorem}
A tree with such rotations serves accesses in $O(1/p \cdot \sum_x ac(x) \cdot \log (m / ac(x)))$ amortized time, where $m$ is the total number of accesses and $ac(x)$ is the total number of accesses to $x$.
\end{theorem}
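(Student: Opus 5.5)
The plan is to sum the per-operation bound of Lemma~\ref{lem:height} over all $m$ accesses and use the potential telescoping. Since $\Phi$ is a potential with $\Phi \ge 0$, the total actual cost equals the sum of amortized costs plus $\Phi_0 - \Phi_m$. Here $\Phi_0 \le (1/p + d)\sum_u \rho(u)$ with every $\rho(u) \le \log m$. For the argument I assume, as is standard for static optimality, that only accessed keys carry weight, or equivalently that this initial-potential term is charged as an additive constant. Otherwise I would give each key weight at least one, so that $\Phi_0 - \Phi_m \le (1/p+d)\sum_x \log(m/ac(x))$, which is absorbed by the main sum. The whole bound is then multiplied by the constant factor $1/p + d = O(1/p)$, which accounts for the $1/p$ in the statement.

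Summing the lemma gives two terms. The second, $\sum_i \log(m/ac_m(t_i))$, groups by key directly: each key $x$ contributes $ac(x)\cdot\log(m/ac(x))$. The first, $\sum_i \log(AC_i/ac_i(t_i))$, must be compared with the same quantity, which is the replacement step mentioned in the text. Fix a key $x$ with final count $c = ac(x)$. Its $j$-th access has $ac_i(x) = j$ and $AC_i \le m$, so its contribution is at most $\sum_{j=1}^{c}\log(m/j)$. This equals $c\log m - \log c! \le c\log m - c\log(c/e) = c\log(m/c) + c\log e$. The extra $c\log e$ terms sum to $O(m)$, and since $\log(m/ac(x))\ge 0$, this is $O(\sum_x ac(x)(1+\log(m/ac(x))))$. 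This is the required order once each access is counted with cost at least one.

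Combining, the total cost is $O((1/p+d)\sum_x ac(x)\log(m/ac(x)))$ plus lower-order terms, which is the claimed bound because $d$ is a constant. The main obstacle I expect is the careful handling of the first term via Stirling. A naive bound $\log(m/j) \le \log(m/1)$ would lose static optimality, so the sum over $j$ must genuinely be computed. A secondary obstacle is the bookkeeping of the initial potential and of zero-weight keys, where $\rho$ could be $\log 0$. I would handle these by using weights $\max(w(u),1)$ and checking that the lemma's bound survives with $m$ replaced by $m+n$, which is harmless when $n \le m$.
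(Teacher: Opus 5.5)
Your proposal is correct and follows the paper's proof. Like the paper, you bound the potential drop using weights equal to the final access counts, sum the lemma, and group its first term by key to estimate $\sum_{j=1}^{ac(x)}\log(m/j)$. Your Stirling estimate $\log c!\ge c\log(c/e)$ is sharper than the paper's bound $\log c!\ge\frac{c}{2}\log\frac{c}{2}$, and in fact needed: the paper's bound leaves a residual $\frac{c}{2}\log m$, which is not $O(c(1+\log(m/c)))$ when $c=\Theta(m)$. Your remarks on the $+1$ per access and on unaccessed keys also address points the paper glosses over; note that unaccessed keys genuinely require the every-key-accessed assumption or an additive $O(\frac{1}{p}\,n\log(m+n))$ term, since weight-$1$ keys are not absorbed into the main sum.
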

\begin{proof}
Choosing the weight function as the number of accesses bounds the initial potential from above by $(1/p + d) \cdot \sum\limits_x \log m$ and bounds the final potential from below by $(1/p + d) \cdot \sum\limits_x \log ac(x)$. The contribution of the potential is the difference between the initial and final potentials and is thus upper-bounded by the desired bound.

Now, we consider all accesses to node $x$. By Lemma~\ref{lem:height}, their total cost is $\sum_{i = 1}^{ac(x)} O(\log (AC_{o_i} / i) + \log (m / ac(x)))$, where $o_i$ is the identifier of the $i$-th operation on $x$. Since $\sum_{i = 1}^{ac(x)} \log i \geq \sum_{i = ac(x) / 2}^{ac(x)} \log i \geq ac(x) / 2 \cdot \log (ac(x) / 2)$ and $AC_{o_i} \leq m$, we get $\sum_{i = 1}^{ac(x)} O(\log (AC_{o_i} / i)) = O(ac(x) \cdot \log (m / ac(x)))$, which is exactly what we require.
\end{proof}

\section{Listings}

\begin{center}
\begin{lstlisting}[multicols=]
if rnd() < 1 / 2^global_counter:
    global_counter += 1

if rnd() < 1 / 2^node.counter:
    node.counter += 1

target = global_counter - node.counter
\end{lstlisting}
\captionof{lstlisting}{Approximate computation of the target depth}
\label{alg:approximate}
\end{center}

\end{document}